\documentclass{bcp}
\usepackage[T1]{fontenc}
\usepackage{xcolor}
\newcommand{\revred}[1]{{\color{red}#1}}

\usepackage{lmodern,microtype,booktabs,array,etoolbox,needspace}
\usepackage[hidelinks]{hyperref}
\newtheorem{theorem}{Theorem}[section]
\newtheorem{proposition}[theorem]{Proposition}

\theoremstyle{definition}

\numberwithin{equation}{section}
\AtBeginEnvironment{thebibliography}{\interlinepenalty=10000}
\newcommand{\revisionbibitem}[1]{%
  \par\color{black}%
  \originalbibitem{#1}}
\AtBeginEnvironment{thebibliography}{%
  \let\originalbibitem\bibitem\let\bibitem\revisionbibitem}
\AtEndEnvironment{thebibliography}{\color{black}}
\newcommand{\g}{\mathfrak g}
\newcommand{\h}{\mathfrak h}
\newcommand{\C}{\mathbb C}
\newcommand{\R}{\mathbb R}
\newcommand{\Z}{\mathbb Z}
\newcommand{\N}{\mathbb N}
\newcommand{\dd}{\mathrm d}
\newcommand{\cL}{\mathcal L}
\newcommand{\cR}{\mathcal R}

\newcommand{\Ad}{\operatorname{Ad}}
\newcommand{\Der}{\operatorname{Der}}
\newcommand{\diver}{\operatorname{div}}

\newcommand{\SU}{\mathrm{SU}}
\newcommand{\SL}{\mathrm{SL}}
\newcommand{\SO}{\mathrm{SO}}
\newcommand{\U}{\mathrm U}

\newcommand{\CE}{\mathrm{CE}}
\newcommand{\ct}{\mathrm{ct}}
\newcommand{\DR}{\mathrm{dR}}
\newcommand{\cyc}{\mathrm{cyc}}

\newcommand{\tr}{\operatorname{tr}}

\makeatletter
\def\pag{\beg=\count0
  \def\phead{\noindent\vbox to4.5pt{\scriptsize\baselineskip9pt
    \centerline{FRONTIERS OF LIE THEORY}
    \centerline{BANACH CENTER PUBLICATIONS -- PROCEEDINGS CONTRIBUTION}
    \vss}\hfill}}
\patchcmd{\maketitlebcp}
 {The paper is in final form and no version of it will be published elsewhere.}
 {Prepared for Frontiers of Lie Theory: Computational Aspects and Applications,
 B\k{e}dlewo, 28 September--4 October 2025. Revised 10 October 2026.}
 {}{\PackageError{review}{BCP front-matter patch failed}{Check bcp.cls.}}
\makeatother
\hypersetup{pdftitle={Beyond the Circle: Current Algebras, Geometry and Higher Gauge Structures},
 pdfauthor={R. Campoamor-Stursberg, A. Marrani, M. Rausch de Traubenberg}}
\begin{document}
\keywords{current algebra, central extension, cyclic cohomology, harmonic analysis,
Kac--Moody algebra, soft geometry, free differential algebra.}
\mathclass{Primary 17B65; Secondary 17B56, 22E46, 58B15, 81R10.}
\abbrevauthors{R. Campoamor-Stursberg, A. Marrani and M. Rausch de Traubenberg}
\abbrevtitle{Current algebras beyond the circle}
\title{Beyond the circle:\\current algebras, geometry and higher gauge structures}
\author{Rutwig Campoamor-Stursberg}
\address{Instituto de Matem\'atica Interdisciplinar and Departamento de Geometr\'ia y Topolog\'ia,\\
Universidad Complutense de Madrid, Madrid, Spain\\
ORCID: 0000-0003-2907-8533\quad E-mail: rutwig@ucm.es}
\author{Alessio Marrani}
\address{Centre for Mathematics and Theoretical Physics,\\
University of Hertfordshire, Hatfield, United Kingdom\\
ORCID: 0000-0002-7597-1050\quad E-mail: a.marrani@herts.ac.uk}
\author{Michel Rausch de Traubenberg}
\address{Universit\'e de Strasbourg, CNRS, IPHC UMR7178, Strasbourg, France\\
ORCID: 0000-0001-5045-2353\quad E-mail: michel.rausch@iphc.cnrs.fr}
\maketitlebcp
\begin{abstract}
Replacing the circle by a manifold leads to current Lie algebras and their
central and derivation extensions. We separate the coefficient algebra,
harmonic analysis and geometric data. Differentials modulo exact forms describe
central extensions; divergence-free vector fields give explicit cocycles,
whose compatible derivations satisfy a stabilizer equation. Compact examples
include the three-sphere and its current--Virasoro extension. We also explain
the analytic qualifications for $\SL(2,\R)$ and its symmetric space and
the role of transported multiplication in soft geometries. The recent harmonic
constructions of Campoamor-Stursberg, Marrani and Rausch de Traubenberg are
compared with the cohomological work of Coquereaux, Frappat, Ragoucy and Sorba
and Castellani's loop geometry and infinite-dimensional free differential
algebras. A coefficient-algebra construction connects currents to higher
brackets and clarifies the additional requirements of a dynamical theory.
For Lie algebras of order $F$, central extensions obey coupled
binary and symmetric $F$-ary conditions, with explicit examples and obstructions.
\end{abstract}

\section{Introduction and scope}\label{s:intro}
The affine construction combines three ingredients: the algebra of functions on
$S^1$, an invariant form on a finite-dimensional Lie algebra, and differentiation
around the circle. Only the first of these is intrinsically one-dimensional.
The question addressed here is what survives when the circle is replaced by a
compact or noncompact manifold $M$. The most robust answer starts with
\begin{equation}\label{e:current}
 \cL(M,\g)=C^\infty(M,\g),\qquad [u,v](p)=[u(p),v(p)]_\g.
\end{equation}
Neither a metric nor a basis is needed for this bracket. A measure and a harmonic
basis make it computable; a cocycle determines a central extension; selected
derivations then determine its additional symmetries. Keeping these choices
separate is essential to a correct higher-dimensional analogue.

We use \emph{current algebra} for \eqref{e:current}. The expression
\emph{manifold Kac--Moody algebra}, used in the physical literature reviewed
below, denotes a specified central and derivation extension of such an algebra.
It does not assert that the result is a contragredient Kac--Moody algebra
defined by a generalized Cartan matrix \cite{Kac67,Moody67,Kac90}, or a
Borcherds algebra \cite{Borcherds88}. Likewise a Virasoro-type algebra of
vector fields along one chosen direction on $M$ is not automatically the
central extension of the full vector-field algebra $\mathfrak X(M)$.

There are three complementary historical strands. The algebraic description
of central extensions through differentials and cyclic homology was developed
in \cite{Kassel84,LodayQuillen84,Connes85}; toroidal constructions and their
presentations appear in \cite{MoritaYoshii85,MoodyRaoYokonuma90}.
Higher-dimensional smooth and super current algebras, including the
derivations compatible with a cocycle, were studied systematically in
\cite{CFRS90,RS92}, alongside surface and sphere constructions
\cite{FLRS90,Dowker91,BoseBruce90}. Second, Castellani introduced loop
geometry and its soft gauging \cite{Castellani88}, extended free differential
algebras (FDAs) to loop modes \cite{Castellani89}, and developed a classical
$D=4$, $N=1$ loop supergravity \cite{Castellani90}.
Third, explicit harmonic realizations were developed in
\cite{CMRcompact,CMRproc,CRfermions,CRvertex,CRspheres,CMRnoncompact,
CMRnoncompactproc,CMRsoft,CMRreview}.

The authorship and scope of this last sequence matter. The compact-manifold
foundation \cite{CMRcompact,CMRproc} is by Campoamor-Stursberg,
de Montigny and Rausch de Traubenberg; the free-field constructions
\cite{CRfermions,CRvertex} are by Campoamor-Stursberg and Rausch de
Traubenberg. The joint work with Marrani develops the noncompact
$\SL(2,\R)$ and $\SL(2,\R)/\U(1)$ cases \cite{CMRnoncompact}, their
proceedings account \cite{CMRnoncompactproc}, soft group manifolds
\cite{CMRsoft}, and the broader review \cite{CMRreview}. The present
contribution emphasizes their structural connections and the hypotheses needed
to interpret the mode formulae rigorously.

Sections~\ref{s:functions}--\ref{s:derivations} establish the algebraic and
cohomological framework. Sections~\ref{s:examples}--\ref{s:soft} treat explicit
examples and changes of geometry. Sections~\ref{s:super} and \ref{s:castellani}
compare the earlier superalgebraic and FDA approaches. Unless otherwise stated,
$\g$ is a finite-dimensional complex simple Lie algebra and $\kappa$ is a fixed
nonzero invariant symmetric bilinear form. Manifolds are smooth, Hausdorff and
second countable; the compact case has no boundary. Cohomology of smooth
current algebras is continuous cohomology. All displayed brackets use
$[T_a,T_b]=f_{ab}{}^cT_c$; factors of $i$ belong to explicitly chosen differential
operators, not to the abstract structure constants.

\Needspace{7\baselineskip}
\section{Coefficient algebras and harmonic realizations}\label{s:functions}
\subsection{An algebra must precede its Hilbert completion}
For a unital commutative associative complex algebra $A$, set
\begin{equation}\label{e:tensor}
 \cL_A=\g\otimes A,\qquad
 [x\otimes f,y\otimes g]=[x,y]\otimes fg.
\end{equation}
Associativity makes each term of the Jacobi identity proportional to $fgh$;
Jacobi therefore follows from that in $\g$. For $A=C^\infty(M)$ this is
\eqref{e:current}, since $\g$ is finite dimensional. On a compact manifold
the usual Fr\'echet topology makes multiplication and the bracket continuous.
For noncompact $M$, a useful default is $C_c^\infty(M)$ (smooth functions with compact support), with the inductive-limit topology of the Fr\'echet spaces of functions supported in a fixed compact set;
this is a nonunital algebra. Decaying coefficient algebras may also be used,
but their closure and continuity properties must be specified.

The Hilbert space $L^2(M,\mu)$ is generally \emph{not} a coefficient algebra.
For example, $f(x)=x^{-1/3}$ belongs to $L^2(0,1)$ whereas $f^2$ does not.
Multiplication $L^2\times L^2\to L^1$ follows from Cauchy--Schwarz, but it
does not produce an $L^2$ Lie bracket. Even if individual basis products are
square integrable, this does not establish closure for arbitrary Hilbert sums.
On compact $M$, $H^s(M)$ is the Sobolev space of distributions $f$ with $(1+\Delta)^{s/2}f\in L^2$, for a nonnegative elliptic Laplacian $\Delta$. For integer $s\ge0$, this means square-integrable weak derivatives through order $s$. It is a Banach algebra for $s>\dim M/2$, but a
first-order derivative normally lowers Sobolev order; $C^\infty(M)$ is the
natural common domain for the present derivation extensions.

Let $\mu$ be a positive smooth density, normalized to total mass one in the
compact case. We distinguish the Hermitian scalar product from the invariant
complex bilinear form:
\begin{equation}\label{e:pairings}
 (f,g)_\mu=\int_M\overline f g\,\mu,\qquad
 B(u,v)=\int_M\kappa(u,v)\,\mu.
\end{equation}
The latter is not an adjoint-trace ``Killing form'' of an infinite-dimensional
algebra. Its invariance is the pointwise identity
$\kappa([u,v],w)=\kappa(u,[v,w])$. Normalization involves a volume, not the
dimension of the underlying group.

\subsection{Compact groups and homogeneous spaces}
For a compact Lie group $K$, let $\widehat K$ denote its irreducible unitary
representations. With normalized Haar measure, Peter--Weyl theory gives
\begin{equation}\label{e:PW}
 L^2(K)\cong\widehat{\bigoplus}_{\pi\in\widehat K}V_\pi^*\otimes V_\pi,
 \qquad \psi^\pi_{ab}(k)=\sqrt{d_\pi}\,\pi(k)_{ab},\quad d_\pi=\dim V_\pi.
\end{equation}
The matrix coefficients $\psi^\pi_{ab}$, $1\le a,b\le d_\pi$, form an orthonormal basis \cite{PeterWeyl,PressleySegal}.
It is the \emph{irreducible} unitary representations that are finite dimensional;
general unitary representations of $K$ need not be. Their algebraic span
$\cR(K)$, the representative functions, is closed under multiplication because
tensor products of finite-dimensional representations decompose into finite
sums. Thus $\g\otimes\cR(K)$ is a genuine algebraic current algebra.

Writing a collective index $I=(\pi,a,b)$,
\begin{equation}\label{e:products}
 \psi_I\psi_J=\sum_K C_{IJ}^{K}\psi_K,\qquad
 C_{IJ}^{K}=\int_K\overline{\psi_K}\psi_I\psi_J\,\mu,
 \qquad \eta_{IJ}=\int_K\psi_I\psi_J\,\mu,
\end{equation}
the product coefficients are contractions of two Clebsch--Gordan intertwiners,
including their multiplicity indices. Consequently
\begin{equation}\label{e:modejac}
 [T_{aI},T_{bJ}]=f_{ab}{}^c\sum_K C_{IJ}^{K}T_{cK},\qquad
 \sum_P C_{IJ}^{P}C_{PK}^{Q}=\sum_P C_{JK}^{P}C_{IP}^{Q}.
\end{equation}
Here $T_{aI}=T_a\otimes\psi_I$, abbreviated $T_a\psi_I$.
The second equality is associativity, and is the coefficient-level reason for
Jacobi. For a general smooth basis these sums need not be finite; the
identities must then be interpreted in an appropriate convergent completion.
For example, eigenfunction expansions of an elliptic Laplacian on compact $M$
converge in $C^\infty$ for smooth functions, with rapid spectral decay.

If $H\subset K$ is closed, functions on $K/H$ are right $H$-invariant
functions on $K$, and
\begin{equation}\label{e:coset}
 L^2(K/H)\cong\widehat{\bigoplus}_{\pi\in\widehat K}
 V_\pi^*\otimes V_\pi^H.
\end{equation}
Here $V_\pi$ is the representation space of $\pi$, $V_\pi^*$ its dual, and $V_\pi^H=\{v\in V_\pi:\pi(h)v=v\text{ for every }h\in H\}$ is the subspace fixed by $H$. The hat over the sum denotes Hilbert completion.
The multiplicity is $\dim V_\pi^H$, not necessarily one. The subalgebra
$\cR(K)^H$ provides finite harmonic products. This avoids assuming a global
coordinate chart or an everywhere valid exponential parametrization.

On a compact semisimple group, minus the Killing form gives a bi-invariant
Riemannian metric. For noncompact real semisimple groups the Killing form is
indefinite; an auxiliary positive density or Riemannian metric must be chosen
separately. On a general homogeneous space a Riemannian metric requires an
$\Ad(H)$-invariant positive inner product on the tangent quotient. Left and
right translations are both isometries of a \emph{bi-invariant} metric;
this is not a statement about an arbitrary metric on a group manifold.

\section{Central extensions: differentials, cycles and cocycles}\label{s:central}
\subsection{The universal algebraic model}
Let $\Omega_A^1$ be the module of K\"ahler differentials, generated by $f\,\dd g$
with the linearity and Leibniz relations, and let $\overline\Omega_A^1=
\Omega_A^1/\dd A$. Bars denote classes in this quotient.
\begin{theorem}[Kassel's current-algebra extension]\label{t:universal}
For $A$ unital, commutative and associative over $\C$ and $\g$ simple, the
universal central extension of $\cL_A$ is
\begin{equation}\label{e:UCE}
 \widehat{\cL}_A=(\g\otimes A)\oplus\overline\Omega_A^1,\qquad
 [x\otimes f,y\otimes g]=[x,y]\otimes fg+
 \kappa(x,y)\overline{f\,\dd g}.
\end{equation}
The second summand is central. Every scalar central-extension class is
obtained by composing its cocycle with a linear functional
$\lambda:\overline\Omega_A^1\to\C$.
\end{theorem}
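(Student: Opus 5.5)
The proof splits into three parts: (i) the bracket \eqref{e:UCE} defines a central extension; (ii) $\cL_A$ is perfect, so a universal central extension exists; (iii) every $2$-cocycle of $\cL_A$ factors, up to a coboundary, through $\psi(x\otimes f,y\otimes g)=\kappa(x,y)\overline{f\,\dd g}$.

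For (i), antisymmetry holds because $\overline{f\,\dd g}+\overline{g\,\dd f}=\overline{\dd(fg)}=0$ in $\overline\Omega^1_A$. The cyclic sum of the cocycle condition is
$$\kappa([x,y],z)\,\overline{fg\,\dd h}+\text{cyclic}.$$
By invariance every coefficient equals $\kappa([x,y],z)$, so the sum reduces to $\overline{fg\,\dd h+gh\,\dd f+hf\,\dd g}=\overline{\dd(fgh)}=0$. For (ii), perfectness of $\cL_A$ follows from $\g=[\g,\g]$ and $1\in A$, since $[x\otimes 1,y\otimes f]=[x,y]\otimes f$. The extension $\widehat\cL_A$ is itself perfect, because $\overline{f\,\dd g}$ arises as the central part of $[x\otimes f,y\otimes g]$ whenever $\kappa(x,y)=1$. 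Since a perfect central extension whose $H_2$ vanishes is universal, it then suffices to show that every central extension splits over $\widehat\cL_A$. Equivalently, we show the universal property directly: for any vector space $Z$ and any $Z$-valued cocycle $\omega$, there is a unique linear map $\lambda:\overline\Omega^1_A\to Z$ with $\omega\sim\lambda\circ\psi$. Uniqueness follows from perfectness.

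For (iii), the main step, fix a cocycle $\omega$. First use $\g$-equivariance. For each $a\in A$, the cocycle identity with a third argument $z\otimes 1$ shows that $\omega$ is, modulo coboundaries, invariant under the diagonal $\g=\g\otimes1$ action. The argument runs as in Whitehead's lemma: $\g$ is semisimple and acts reductively on the finite-dimensional pieces $\g\otimes V$ for $V\subset A$ finite-dimensional. Concretely, $H^1(\g,\Hom(\g\otimes A,Z))=0$ componentwise, which lets us subtract a coboundary $\omega\mapsto\omega-\beta([\cdot,\cdot])$. After this we may assume $\omega([z,x]\otimes f,y\otimes g)+\omega(x\otimes f,[z,y]\otimes g)=0$. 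Then, for fixed $f,g$, the map $(x,y)\mapsto\omega(x\otimes f,y\otimes g)$ is an invariant bilinear form on simple $\g$. It is therefore $\kappa(x,y)\,\phi(f,g)$ for a bilinear $\phi:A\times A\to Z$. One should also rule out an antisymmetric invariant part: on simple $\g$ every invariant bilinear form is a multiple of $\kappa$, hence symmetric.

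Next translate the cocycle identity into conditions on $\phi$. Antisymmetry of $\omega$ together with symmetry of $\kappa$ gives $\phi(f,g)=-\phi(g,f)$. The cocycle identity becomes $\kappa([x,y],z)\bigl(\phi(fg,h)+\phi(gh,f)+\phi(hf,g)\bigr)=0$. Choosing $x,y,z$ with $\kappa([x,y],z)\neq0$, which is possible since $\g$ is nonabelian, yields the cyclic relation $\phi(fg,h)+\phi(gh,f)+\phi(hf,g)=0$. Setting $g=h=1$ gives $\phi(f,1)=0$. Now define $\lambda(\overline{f\,\dd g})=\phi(f,g)$. We must check that $\lambda$ respects the relations of $\Omega^1_A$ and vanishes on $\dd A$.
\begin{itemize}
\item Linearity in each slot is clear.
\item The Leibniz relation requires $\phi(f,gh)=\phi(fg,h)+\phi(fh,g)$. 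This is exactly the cyclic identity combined with antisymmetry.
\item Exact forms are killed because $\dd f=1\cdot\dd f$ maps to $\phi(1,f)=-\phi(f,1)=0$.
\end{itemize}
Hence $\lambda$ is well defined and $\omega=\lambda\circ\psi$.

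I expect the main obstacle to be the reduction to $\g$-invariant cocycles. When $A$ is infinite-dimensional one cannot apply Whitehead's lemma naively, and for smooth or continuous cohomology the coboundary correction must also be continuous. The route I will try first is the standard trick of Kassel and Loday--Quillen. Write the elements of $\g\otimes A$ as $\g$-modules of the form $\g\otimes A$, i.e.\ direct sums of copies of the adjoint representation tensored with a trivial module. Then use a Casimir element of $\g$, built from $\kappa$, to produce the correcting $1$-cochain explicitly by a formula that depends only linearly, and hence continuously, on $\omega$. This avoids any choice of basis of $A$.
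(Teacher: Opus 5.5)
Your parts (i) and (ii) are exactly the paper's proof. The paper verifies the cocycle identity, notes that $\cL_A$ is perfect, and then takes universality and the final sentence of the statement from \cite{Kassel84} (see also \cite{NeebWagemann}).

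Your part (iii) reconstructs that cited argument, and it is correct. Once $\omega$ is $\g$-invariant, Schur's lemma gives $\omega(xf,yg)=\kappa(x,y)\phi(f,g)$. Choosing $\kappa([x,y],z)\neq0$ then extracts the degree-one cyclic identities. Those identities together with $\phi(1,\cdot)=0$ are exactly the linearity, Leibniz and exactness relations defining $\overline\Omega^1_A$. The paper's route is brief but outsources the only nontrivial step. Yours is self-contained and shows where simplicity of $\g$ and the unit of $A$ are used.

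The obstacle you anticipate does not arise in this algebraic setting. Since $\g$ is finite dimensional, $\operatorname{Hom}(\g\otimes A,Z)\cong\g^*\otimes\operatorname{Hom}(A,Z)$ with $\g$ acting only on $\g^*$. Its first $\g$-cohomology is therefore $H^1(\g,\g^*)\otimes\operatorname{Hom}(A,Z)=0$ for arbitrary $A$ and $Z$, by Whitehead's lemma, with no finiteness condition needed. What you should state explicitly is that $z\mapsto\omega(z\otimes1,\cdot)$ is a $1$-cocycle for this module. This is just the $2$-cocycle identity with two arguments in $\g\otimes1$. Subtracting the resulting coboundary even gives $\omega(z\otimes1,\cdot)=0$, hence $\phi(1,\cdot)=0$ at once.

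Continuity plays no role in this theorem. For the smooth analogue \eqref{e:smoothcoho}, a continuous correction would not suffice. One must also show that $\phi$ factors through smooth one-forms rather than the algebraic K\"ahler module, which the paper takes from \cite{Maier02,JanssensWockel}.
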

The universality statement is \cite{Kassel84}; see also
\cite{NeebWagemann}. The cocycle identity itself has a short direct proof.
Antisymmetry follows from $f\,\dd g+g\,\dd f=\dd(fg)$. In the cyclic sum
of the central terms, invariance of $\kappa$ leaves the common scalar
$\kappa([x,y],z)$ times
$\overline{fg\,\dd h+gh\,\dd f+hf\,\dd g}=
\overline{\dd(fgh)}=0$. Finally $\cL_A$ is perfect, since $\g$ is perfect
and $A$ has a unit, so the universal-extension assertion applies.

Equivalently, the scalar function
\begin{equation}\label{e:cyclic}
 \Phi(f,g)=\lambda(\overline{f\,\dd g})
\end{equation}
satisfies
$\Phi(f,g)=-\Phi(g,f)$ and
$\Phi(fg,h)+\Phi(gh,f)+\Phi(hf,g)=0$.
These are the degree-one cyclic cocycle identities. To fix the signs, the
Hochschild differential on $(q+1)$-linear cochains is
\begin{align}\label{e:hoch}
 (b\Phi)(a_0,\ldots,a_{q+1})
 &=\sum_{j=0}^{q}(-1)^j\Phi(a_0,\ldots,a_ja_{j+1},\ldots,a_{q+1})\notag\\
 &\quad+(-1)^{q+1}\Phi(a_{q+1}a_0,a_1,\ldots,a_q).
\end{align}
Cyclicity means $t\Phi=\Phi$, with
$(t\Phi)(a_0,\ldots,a_q)=(-1)^q\Phi(a_q,a_0,\ldots,a_{q-1})$.
Thus $b\Phi=0$ and $t\Phi=\Phi$ reproduce the two identities above
\cite{Connes85,LodayQuillen84}. This relation is more precise than calling
all antisymmetric bilinear forms Hochschild cocycles.

\subsection{The smooth category and closed currents}
For $A=C^\infty(M)$, continuous extensions use the \emph{smooth}
differentials $\Omega^1(M)$ and their topological quotient by
$\dd C^\infty(M)$. They should not be confused with the unrestricted
algebraic K\"ahler module of the underlying abstract algebra. In the compact
case the continuous analogue of Theorem~\ref{t:universal} gives
\begin{equation}\label{e:smoothcoho}
 H^2_\ct(\cL(M,\g),\C)
 \cong\bigl(\Omega^1(M)/\dd C^\infty(M)\bigr)'_\ct.
\end{equation}
$H^2_\ct$ denotes continuous Lie algebra cohomology: continuous antisymmetric bilinear cocycles modulo coboundaries of continuous linear functionals. Here $\Omega^1(M)$ is the Fr\'echet space of smooth one-forms, $\dd C^\infty(M)$ consists of exact one-forms, and the quotient carries its quotient topology. The prime with subscript $\ct$ denotes its continuous linear dual. Thus the isomorphism classifies scalar cocycles through continuous functionals on the universal central space.
The cocycle is $\Omega(u,v)=\overline{\kappa(u,\dd v)}$.
This follows from the continuous current and gauge algebra results of
\cite{Maier02,JanssensWockel}. Their compactly supported version also applies
to $C_c^\infty(M,\g)$, with $\Omega_c^1(M)/\dd C_c^\infty(M)$ when $M$ is noncompact. Here $\Omega_c^1$ means compactly supported smooth one-forms with their inductive-limit topology; the continuous dual of this quotient, rather than the quotient itself, again parametrizes scalar extensions.

A closed distributional one-current is a continuous linear functional $T$ on one-forms such that $T(\dd f)=0$ for every test function $f$. It therefore descends to the quotient by exact forms. For example, $T(\alpha)=\oint_\Gamma\alpha$ for a closed curve $\Gamma$ satisfies this condition because $\oint_\Gamma\dd f=0$. Its cocycle is $T(\kappa(u,\dd v))$. If $M$ is oriented of
dimension $n$ and $\gamma$ is a smooth closed $(n-1)$-form, a particularly
useful example is
\begin{equation}\label{e:gamma}
 \omega_\gamma(u,v)=\int_M\kappa(u,\dd v)\wedge\gamma.
\end{equation}
Here compactness, compact support or explicit decay eliminates boundary terms.
Distributional currents, such as integration over a closed curve in $M$, give
additional examples \cite[Chap.~4]{PressleySegal}; smooth $\gamma$ do not exhaust all continuous cocycles.

\begin{proposition}\label{p:div}
Let $\mu$ be a positive volume form and $L$ a smooth vector field. Then
\begin{equation}\label{e:omegaL}
 \omega_L(u,v)=\int_M\kappa(u,Lv)\,\mu
\end{equation}
is an antisymmetric Lie cocycle if $\diver_\mu L=0$. For nonzero $L$ it
represents a nonzero central-extension class.
\end{proposition}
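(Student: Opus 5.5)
We would prove the proposition by relating $\omega_L$ to the closed-current cocycles just described, plus one direct computation. Throughout, $M$ is compact or the currents are compactly supported, so every integral of a divergence vanishes.

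\textbf{Antisymmetry.} Since $L$ is a derivation of $C^\infty(M)$ and $\kappa$ is constant, we have $L\kappa(u,v)=\kappa(Lu,v)+\kappa(u,Lv)$. Integrating against $\mu$ and using
\[
\int_M (Lf)\,\mu=\int_M \cL_L(f\mu)-\int_M f\,(\diver_\mu L)\,\mu = 0
\]
when $\diver_\mu L=0$, we get $\omega_L(u,v)=-\omega_L(v,u)$.

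\textbf{Cocycle identity.} We expand the cyclic sum $\sum_{\cyc}\omega_L([u,v],w)$. Write $u=\sum x_i\otimes f_i$, or argue pointwise. Then
\[
\kappa([u,v],Lw)=\kappa(u,[v,Lw]).
\]
Combine this with the term $\kappa([w,u],Lv)=-\kappa(u,[Lv,w])$. The Leibniz rule $L[v,w]=[Lv,w]+[v,Lw]$ then turns the sum into
\[
\int_M\kappa(u,L[v,w])\,\mu+\int_M\kappa([v,w],Lu)\,\mu = \int_M L\bigl(\kappa(u,[v,w])\bigr)\mu = 0,
\]
again by zero divergence.

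Alternatively, and as a conceptual check, one can view $\omega_L$ as the current $T(\alpha)=\int_M\alpha(L)\,\mu$. This current is closed because $\int_M (Lf)\,\mu=0$. The cocycle property then follows from Theorem~\ref{t:universal} and \eqref{e:smoothcoho}: $\omega_L(u,v)=T(\kappa(u,\dd v))$. In the oriented case, $T$ is exactly $\omega_\gamma$ with $\gamma=\iota_L\mu$, and $\dd\gamma=(\diver_\mu L)\mu=0$.

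\textbf{Nontriviality.} By \eqref{e:smoothcoho}, the class of $\omega_L$ is zero iff the functional $T$ vanishes on $\Omega^1(M)/\dd C^\infty(M)$, i.e.\ iff $T(\alpha)=0$ for all one-forms $\alpha$. Concretely, a coboundary $\phi([u,v])$ would vanish on pairs such as $u=x\otimes f$, $v=x\otimes g$, where $[u,v]=0$. For such a pair, however, $\omega_L(u,v)=\kappa(x,x)\int_M fLg\,\mu$, and we may pick $x$ with $\kappa(x,x)\neq0$. It therefore suffices to find $f,g$ with $\int_M f\,Lg\,\mu\neq0$. Choose a point $p$ with $L_p\neq0$ and local coordinates in which $L=\partial_1$ near $p$. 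Take $g=x_1\chi$ with $\chi$ a cutoff equal to $1$ near $p$, so that $Lg=1$ near $p$. Take $f\geq0$ a bump supported where $Lg=1$. The integral is then $\int f\mu>0$.

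\textbf{Main obstacle.} The delicate point is this last step's use of the cohomological identification: one must ensure that the pair argument really rules out every continuous coboundary, with no appeal to perfectness subtleties. The pair argument does this directly, since any coboundary $\phi([u,v])$ vanishes identically on commuting $u,v$. In the noncompact setting, one must also check that $f$ and $g$ have compact support; the cutoff construction already guarantees this.
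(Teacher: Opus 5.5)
Your proposal is correct and follows essentially the paper's proof: the cocycle property comes from integrating total $L$-derivatives, equivalently from $\gamma=\iota_L\mu$ with $\dd\gamma=(\diver_\mu L)\mu$, and nontriviality comes from commuting currents $xf,xg$, on which every coboundary vanishes; your flow-box choice of $f,g$ replaces the paper's simpler choice $f=Lg$. One small slip is that invariance gives $\kappa([w,u],Lv)=+\kappa(u,[Lv,w])$, not $-\kappa(u,[Lv,w])$, and with the correct sign your Leibniz step yields exactly the expression $\int_M\kappa(u,L[v,w])\,\mu$ that you write.
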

\begin{proof}
Set $\gamma=\iota_L\mu$. Cartan's formula gives
$\dd\gamma=(\diver_\mu L)\mu$, while
$\dd g\wedge\iota_L\mu=(Lg)\mu$. Thus \eqref{e:omegaL} is
\eqref{e:gamma}; alternatively, antisymmetry and the cocycle identity follow
by integrating $L(fg)$ and $L(fgh)$. For nontriviality, $\Phi_L(f,g)=\int fLg\,\mu$ is nonzero when $L\ne0$: choose real $g$ of compact support with $Lg\ne0$ and $f=Lg$. For commutative coefficients, cyclic coboundaries $b\alpha(f,g)=\alpha(fg)-\alpha(gf)$ vanish \cite{RS92}. Directly on the Lie algebra, choose $h$ with $\kappa(h,h)\ne0$. The currents $hf,hg$ commute but $\omega_L(hf,hg)\ne0$, whereas every Lie algebra coboundary vanishes on commuting pairs.
\end{proof}

Divergence-free vector fields form a Lie algebra, since $\diver_\mu[V,W]=V(\diver_\mu W)-W(\diver_\mu V)$.
Note that $\gamma=\dd\beta$ does \emph{not} imply that
$\omega_\gamma$ is a Lie algebra coboundary. Indeed,
\begin{equation}\label{e:exactgamma}
 \int_M f\,\dd g\wedge\dd\beta
 =\int_M\dd f\wedge\dd g\wedge\beta
\end{equation}
The plus sign follows from $\dd(f\,\dd g\wedge\beta)=\dd f\wedge\dd g\wedge\beta-f\,\dd g\wedge\dd\beta$ and Stokes' theorem with zero boundary term. The integral need not vanish. The quotient in \eqref{e:smoothcoho} is
$\Omega^1/\dd\Omega^0$, not $H^1_\DR$; its numerator consists of all
one-forms, not only closed ones. In particular $S^3$ has
$H^2_\DR(S^3)=0$ ($H^k_\DR$ is the space of closed $k$-forms modulo exact ones) but admits nonzero cocycles \eqref{e:omegaL}.
For $n\ge2$, compactly supported choices of $\beta$ in disjoint coordinate
balls yield infinitely many independent cocycles. This observation explains
why the higher-dimensional central space is much larger than in the circle case.

For real forms whose complexification is simple, the same construction works
over $\R$. A complex simple algebra regarded as real has two independent
real invariant symmetric forms, obtained from the real and imaginary parts of
its complex form. A classification covering \emph{all} real simple algebras
must therefore use the universal space of invariant forms, as in
\cite{JanssensWockel}, rather than assume that every real invariant form is a
multiple of the real Killing form.

\section{Compatible derivations and selected central charges}\label{s:derivations}
A vector field $V$ differentiates \eqref{e:current}. After adjoining a central
generator $c$, we first ask for a lift with $[V,c]=0$ and no additional
central component in $[V,u]$. This specifies the meaning of compatibility.
\begin{proposition}[Compatibility equation]\label{p:compat}
For the cocycle \eqref{e:omegaL}, with $\diver_\mu L=0$, such a lift exists
if and only if
\begin{equation}\label{e:compat}
 [V,L]+(\diver_\mu V)L=0.
\end{equation}
Equivalently, $\cL_V(\iota_L\mu)=0$.
\end{proposition}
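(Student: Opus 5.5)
The plan is to translate the existence of a lift into the invariance of the cocycle \eqref{e:omegaL} under $V$. On the central extension $\widehat{\cL}=\cL(M,\g)\oplus\C c$, with bracket $[u,v]^\wedge=[u,v]+\omega_L(u,v)c$, a lift of the kind specified is the operator $\widehat V(u+\lambda c)=Vu$. We will show that $\widehat V$ is a derivation exactly when $V$ preserves $\omega_L$, and then compute that invariance condition.

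\textbf{Step 1: reduction to invariance.} Expanding the derivation condition for $\widehat V$ on $[u,v]^\wedge$, the current parts agree because $V$ differentiates \eqref{e:current} pointwise. The central parts give
\[
 0=\omega_L(Vu,v)+\omega_L(u,Vv),
\]
since the left-hand side contributes nothing central ($\widehat V c=0$). So the lift exists if and only if $V\cdot\omega_L=0$.

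\textbf{Step 2: the invariance condition in terms of $\gamma$.} Set $\gamma=\iota_L\mu$. By the proof of Proposition~\ref{p:div}, $\omega_L(u,v)=\int_M\kappa(u,\dd v)\wedge\gamma$. We use $V\kappa(u,\dd v)=\kappa(Vu,\dd v)+\kappa(u,\dd(Vv))$, which holds because $\cL_V$ commutes with $\dd$ and $\kappa$ is constant. Then
\[
 \omega_L(Vu,v)+\omega_L(u,Vv)=\int_M \cL_V\bigl(\kappa(u,\dd v)\bigr)\wedge\gamma .
\]
Integrating by parts, using $\int_M\cL_V(\text{top form})=\int_M\dd\iota_V(\cdot)=0$ (compact $M$ or compact supports), this equals
\[
 -\int_M\kappa(u,\dd v)\wedge\cL_V\gamma .
\]

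\textbf{Step 3: nondegeneracy.} Choose $h$ with $\kappa(h,h)\ne0$ and take $u=hf$, $v=hg$. The condition becomes $\int_M f\,\dd g\wedge\cL_V\gamma=0$ for all test functions $f,g$. Forms $f\,\dd g$ locally span all one-forms: in a chart, take $g=x^i$ times a cutoff. Since the wedge pairing of one-forms with $(n-1)$-forms into $n$-forms is nondegenerate, this gives $\cL_V\gamma=0$. The converse is immediate from Step 2. This proves the equivalence with $\cL_V(\iota_L\mu)=0$.

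\textbf{Step 4: rewriting as \eqref{e:compat}.} We compute
\[
 \cL_V\iota_L\mu=\iota_{[V,L]}\mu+\iota_L\cL_V\mu=\iota_{[V,L]+(\diver_\mu V)L}\,\mu,
\]
using $\cL_V\mu=(\diver_\mu V)\mu$. Since $\mu$ is a volume form, $X\mapsto\iota_X\mu$ is injective, which yields \eqref{e:compat}.

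\textbf{Main obstacle.} The delicate point is Step 1: deciding what "lift" allows. If $[\widehat V,u]$ were permitted a central term $\phi(u)c$, the condition would weaken to $V\cdot\omega_L$ being a coboundary. The statement excludes this by hypothesis, so the invariance must hold on the nose. Secondary care is needed in the boundary terms of Step 2 in the noncompact case, where we rely on the compact supports of $u$ and $v$.
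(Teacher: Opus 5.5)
Your proposal is correct and follows essentially the paper's route. In both, the lift condition is the mixed Jacobi identity $\omega_L(Vu,v)+\omega_L(u,Vv)=0$; integration by parts and nondegeneracy (via currents $hf,hg$) then give a pointwise condition; and $\cL_V\iota_L-\iota_L\cL_V=\iota_{[V,L]}$ links the two forms of the criterion. The only difference is the order: you work with $\gamma=\iota_L\mu$ and obtain $\cL_V\gamma=0$ first, whereas the paper integrates by parts against $\mu$ to get $-\int_M\kappa\bigl(u,([V,L]+(\diver_\mu V)L)v\bigr)\,\mu$ directly and then deduces the form version.
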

\begin{proof}
The mixed Jacobi identity is
$\omega_L(Vu,v)+\omega_L(u,Vv)=0$. Integration by parts expresses its
left-hand side as
\[
 -\int_M\kappa\bigl(u,([V,L]+(\diver_\mu V)L)v\bigr)\,\mu.
\]
Arbitrary test functions and nondegeneracy of $\kappa$ give
\eqref{e:compat}. The equivalent form follows from
$\cL_V\iota_L-\iota_L\cL_V=\iota_{[V,L]}$.
\end{proof}
This is the ungraded specialization of the compatibility analysis of
\cite{CFRS90,RS92}. Its solutions form a Lie algebra, being the stabilizer of
the form $\iota_L\mu$. They include every $V=fL$, since
$[fL,L]=-(Lf)L$ and $\diver_\mu(fL)=Lf$. They generally include more.
No restriction to Killing fields is necessary.

For several cocycles, impose one compatibility equation for each fixed
central generator. Let $D_i$, $1\leq i\leq r$, be commuting derivations of
$A$. A trace here is a linear functional $\tau:A\to\C$ with $\tau(fg)=\tau(gf)$, automatic for commutative $A$; it need not be a matrix trace or a multiplicative map. Assume $\tau(D_i f)=0$, giving $\tau((D_i f)g)=-\tau(fD_i g)$ by Leibniz. For smooth coefficients a standard example is $\tau(f)=\int_M f\,\mu$, with $D_i$ divergence-free. Functionals are continuous in the smooth category. Then
\begin{align}\label{e:multi}
 [x f,y g]&=[x,y]fg+\kappa(x,y)\sum_i\tau(fD_i g)c_i,\notag\\
 [d_i,xf]&=xD_i f,\qquad [d_i,d_j]=[d_i,c_j]=0
\end{align}
define a Lie algebra. The $c_i$ are \emph{central generators}; their scalar
eigenvalues in a representation are levels or central charges. They need not
span the full universal center. If $D_i\psi_I=\lambda_i(I)\psi_I$, the
central coefficient is
$\kappa(x,y)\sum_i\lambda_i(J)\eta_{IJ}c_i$.
Skewness follows from $(\lambda_i(I)+\lambda_i(J))\eta_{IJ}=0$.

For completeness, when $B$ in \eqref{e:pairings} is nondegenerate the extension
\eqref{e:multi} carries an invariant symmetric form extending $B$ by
$B(c_i,d_j)=-\delta_{ij}$, with all remaining new pairings zero.
The minus sign is forced by our convention $\omega_i(u,v)=B(u,D_i v)$.
This supplies the familiar pairing between a selected central direction and
its derivation, but is not a universal classification of maximal compatible
subalgebras.

There is a useful alternative to fixing the center. Every vector field $V$
acts on the universal central space by Lie derivative, with
\[
 \cL_V\Omega(u,v)=\Omega(Vu,v)+\Omega(u,Vv).
\]
Thus all vector fields act on the universal extension, with a generally
nontrivial action on its central space. After forming this semidirect product,
that space is abelian but need not be central in the enlarged algebra.
Lie derivative commutes with $\dd$ and preserves exact forms,
proving the quotient identity. For the scalar pushout $\omega=\lambda\circ\Omega$,
write $[u,v]_{\widehat{\cL}}=[u,v]+\omega(u,v)c$.
Let $V\in\Der(\cL)$ and $\chi:\cL\to\C$ be linear (continuous in the smooth case).
For $\widehat Vu=Vu+\chi(u)c$, $\widehat Vc=0$, compute
\begin{align*}
 \widehat V[u,v]_{\widehat{\cL}}
 &=V[u,v]+\chi([u,v])c,\\
 [\widehat Vu,v]_{\widehat{\cL}}+[u,\widehat Vv]_{\widehat{\cL}}
 &=[Vu,v]+[u,Vv]+\bigl(\omega(Vu,v)+\omega(u,Vv)\bigr)c.
\end{align*}
Since $V$ differentiates the base bracket, the lift exists exactly when
\begin{equation}\label{e:correctedlift}
 (\Delta_V\omega)(u,v):=\omega(Vu,v)+\omega(u,Vv)=\chi([u,v]).
\end{equation}
The right-hand bracket is in $\cL$. With
$(d_\CE\chi)(u,v)=-\chi([u,v])$, this is $\Delta_V\omega=-d_\CE\chi$:
invariance of the extension class, allowing a splitting change.

For the \emph{canonical} smooth current representative on compact $M$
and a vector field $V$, a correction does not relax invariance. Choose $h\in\g$ with
$\kappa(h,h)\ne0$ and put $u=hf$, $v=hg$. These currents commute, so
\eqref{e:correctedlift} gives
$\kappa(h,h)\lambda(\cL_V\overline{f\,\dd g})=0$.
These forms span the quotient by a partition of unity, so $\lambda\circ\cL_V=0$.
Conversely this gives $\Delta_V\omega=0$; perfectness then forces $\chi=0$.
In particular Proposition~\ref{p:compat} remains the criterion for
$\omega_L$. A non-invariant \emph{cohomologous} representative can still
require a correction: if $\omega'=\omega+d_\CE\alpha$ and
$\Delta_V\omega=0$, then
$\Delta_V\omega'=d_\CE(\alpha\circ V)$, and
$\chi=-\alpha\circ V$ satisfies \eqref{e:correctedlift}.

\section{Compact examples and a current--Virasoro construction}\label{s:examples}
\subsection{The circle and the torus}
For $A=\C[t,t^{-1}]$, the quotient $\overline\Omega_A^1$ is generated by
$\overline{t^{-1}\dd t}$: $t^r\dd t$ is exact unless $r=-1$.
Writing $x_m=x\otimes t^m$ gives
\begin{equation}\label{e:affine}
 [x_m,y_n]=[x,y]_{m+n}+n\delta_{m+n,0}\kappa(x,y)c,
 \qquad [d,x_m]=m x_m.
\end{equation}
Replacing $c$ by $-c$ gives the frequent convention with $m$ in the central
term. With $\ell_m=-t^{m+1}\partial_t$, one has
$[\ell_m,x_n]=-n x_{m+n}$ and
$[\ell_m,\ell_n]=(m-n)\ell_{m+n}$; their Virasoro extension has central
term $(m^3-m)\delta_{m+n,0}C/12$ \cite{PressleySegal,GoddardOlive,Fuks}.
These conventions will remain fixed below.

For $A=\C[t_1^{\pm1},\ldots,t_d^{\pm1}]$ and
$t^{\mathbf m}=t_1^{m_1}\cdots t_d^{m_d}$, define
\begin{equation}\label{e:torcenter}
 K_i(\mathbf r)=\overline{t^{\mathbf r}t_i^{-1}\dd t_i},\qquad
 \sum_i r_iK_i(\mathbf r)=0.
\end{equation}
The universal central term is
\begin{equation}\label{e:toruce}
 \kappa(x,y)\sum_i n_iK_i(\mathbf m+\mathbf n).
\end{equation}
Its homogeneous center has dimension $d$ in degree zero and $d-1$ in each
nonzero degree, hence is infinite dimensional for $d>1$. The familiar
$d$-center presentation
\begin{equation}\label{e:torfinite}
 [x_{\mathbf m},y_{\mathbf n}]=[x,y]_{\mathbf m+\mathbf n}
 +\kappa(x,y)\delta_{\mathbf m+\mathbf n,0}\sum_i n_i c_i
\end{equation}
is a quotient obtained by retaining only the degree-zero center. Toroidal
Chevalley-type presentations \cite{MoritaYoshii85,MoodyRaoYokonuma90} do not
identify every such quotient with an ordinary contragredient algebra.

\subsection{The two-sphere}
Use $\mu=\sin\theta\,\dd\theta\,\dd\varphi/(4\pi)$ and
$\psi_{\ell m}=\sqrt{4\pi}Y_{\ell m}$ in the Condon--Shortley convention.
Then $\psi_{00}=1$, $D=-i\partial_\varphi$ has eigenvalue $m$, and
\begin{align}\label{e:s2}
 \psi_{\ell m}\psi_{\ell'm'}
 &=\sum_{L,M}\sqrt{\frac{(2\ell+1)(2\ell'+1)}{2L+1}}
 C_{\ell0,\ell'0}^{L0}C_{\ell m,\ell'm'}^{LM}\psi_{LM},\notag\\
 \eta_{\ell m,\ell'm'}&=(-1)^m\delta_{\ell\ell'}\delta_{m+m^{\prime},0}.
\end{align}
Here $C_{\ell m,\ell'm'}^{LM}$ denotes an ordinary Clebsch--Gordan
coefficient, not a Wigner $3j$ symbol. Selection rules make the sum finite.
The central coefficient is $\kappa_{ab}m'\eta_{\ell m,\ell'm'}c$.
Although spherical coordinates degenerate at the poles, $\partial_\varphi$
is a globally smooth vector field with zeros there. Its use is therefore
geometrically legitimate. The vanishing of $H^1_\DR(S^2)$ does not eliminate
this current cocycle; Proposition~\ref{p:div} applies.

\subsection{The three-sphere as \texorpdfstring{$\SU(2)$}{SU(2)}}\label{ss:su2}
Let $W^j_{mn}=\sqrt{2j+1}\,D^j_{mn}$ be normalized matrix coefficients,
with $j\in\frac12\N_0$ and $m,n=-j,-j+1,\ldots,j$. Choose commuting
left and right infinitesimal circle generators so that
$D_LW^j_{mn}=mW^j_{mn}$ and $D_RW^j_{mn}=nW^j_{mn}$.
With standard compatible phases,
\begin{align}\label{e:SU2product}
 W^j_{mn}W^{j'}_{m'n'}&=
 \sum_{J=|j-j'|}^{j+j'}\sqrt{\frac{(2j+1)(2j'+1)}{2J+1}}
 C_{jm,j'm'}^{J,m+m'}C_{jn,j'n'}^{J,n+n'}W^J_{m+m',n+n'},\notag\\
 \overline{W^j_{mn}}&=(-1)^{m-n}W^j_{-m,-n},\qquad
 \eta_{IJ}=(-1)^{m-n}\delta_{jj'}\delta_{m+m^{\prime},0}\delta_{n+n^{\prime},0}.
\end{align}
The sum over $J$ advances in integer steps. Equations \eqref{e:SU2product}
determine all structure constants without a piecewise hypergeometric formula
or coordinate singularities. We write them as $C_{IJ}^{K}$.
The extension with two selected central directions is
\begin{align}\label{e:su2current}
 [T_{aI},T_{bJ}]&=f_{ab}{}^c\sum_K C_{IJ}^{K}T_{cK}
 +\kappa_{ab}(m'c_L+n'c_R)\eta_{IJ},\notag\\
 [d_L,T_{aI}]&=mT_{aI},\qquad [d_R,T_{aI}]=nT_{aI}.
\end{align}
Here $I=(j,m,n)$ and $T_{aI}=T_a\otimes W^j_{mn}$.
Both $D_L$ and $D_R$ preserve Haar measure and commute. Equation
\eqref{e:multi} proves all Jacobi identities. The full smooth universal
extension has a much larger center than $\C c_L\oplus\C c_R$.

The compatible vector fields are not exhausted by these two generators.
If $a,b$ are smooth functions invariant under both circle actions, then
\begin{equation}\label{e:compatiblecounter}
 V=aD_L+bD_R
\end{equation}
has zero divergence and commutes with both $D_L,D_R$. Such functions include
arbitrary smooth functions of $|z_1|^2$ in
$S^3=\{(z_1,z_2):|z_1|^2+|z_2|^2=1\}$. This gives an
infinite-dimensional family compatible with both cocycles. The two constant
generators provide a useful diagonal choice, not a uniqueness theorem.

Let $\h\subset\g$ be a Cartan subalgebra. Relative to
$\h\otimes1$, $d_L$ and $d_R$, a root vector $E_\alpha W^j_{mn}$ has
weight $(\alpha,m,n)$. For fixed admissible $m,n$, infinitely many $j$
occur. The weight spaces therefore have infinite multiplicity, and the
centralizer also contains $\h W^j_{00}$ for every integral $j$.
This finite-dimensional diagonal subalgebra is not a self-centralizing Cartan
subalgebra of the full current algebra. These weight decompositions do not
by themselves provide simple roots or a Chevalley--Serre presentation.

\subsection{A Virasoro cocycle along a chosen direction}\label{ss:vir}
The following elementary formulation makes both the scope and the signs of
the construction transparent. Let $A$ be commutative, $D$ a derivation and
$\tau$ a linear functional such that $\tau(Df)=0$. Regard $A$ as a Lie
algebra with bracket $\{f,g\}=gDf-fDg$, represented by
$\ell_f=-fD$ whenever this representation is faithful.
\begin{proposition}\label{p:vir}
The bilinear form
\begin{equation}\label{e:vircoc}
 \nu(f,g)=\frac1{12}\tau\bigl((D^3f-Df)g\bigr)
\end{equation}
is an antisymmetric Lie cocycle. Its linear-derivative part is a coboundary.
Moreover $\ell_f$ acts on the current algebra by $\ell_f(xg)=-xfDg$ and
preserves the current cocycle $\kappa(x,y)\tau(gDh)$.
\end{proposition}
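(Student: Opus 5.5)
The plan is to separate $\nu$ into its cubic part $\nu_3(f,g)=\frac1{12}\tau\bigl((D^3f)g\bigr)$ and its linear part $\nu_1(f,g)=-\frac1{12}\tau\bigl((Df)g\bigr)$, and to treat each using only Leibniz and the invariance $\tau(Dk)=0$. As noted before \eqref{e:multi}, these give the integration-by-parts rule $\tau((Dk)l)=-\tau(kDl)$.

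\emph{Antisymmetry.} Applying the integration-by-parts rule three times gives $\tau((D^3f)g)=-\tau(fD^3g)$. Applying it once gives $\tau((Df)g)=-\tau(fDg)$. Hence both $\nu_3$ and $\nu_1$ are antisymmetric.

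\emph{The linear part is a coboundary.} Take the functional $\alpha=\tau$ on $A$, with the convention $(d_\CE\alpha)(f,g)=-\alpha(\{f,g\})$ used in Section~\ref{s:derivations}. Then
\[
(d_\CE\tau)(f,g)=-\tau(gDf)+\tau(fDg)=-2\,\tau\bigl((Df)g\bigr).
\]
Therefore $\nu_1=\tfrac1{24}\,d_\CE\tau$. This is a coboundary, and in particular it is automatically a cocycle. It remains only to prove the cocycle identity for $\nu_3$.

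\emph{The cubic part is a cocycle.} I would first rewrite $\nu_3$ in Gelfand--Fuks form. Integrating by parts once gives $\tau((D^3f)g)=-\tau((D^2f)(Dg))$, which is antisymmetric by the same rule. I would then expand the cyclic sum
\[
\sum_{\cyc}\tau\bigl((D^2\{f,g\})\,Dh\bigr),\qquad \{f,g\}=gDf-fDg.
\]
Expanding $D^2(gDf-fDg)$ by Leibniz gives monomials of the form $D^ifD^jgD^kh$ with $i+j+k=4$. Terms with a factor $D^3$ cancel pairwise within each cyclic slot. The remaining terms should regroup, after cyclic relabelling, into $\tau$ of a total derivative such as $D\bigl((Df)(Dg)(Dh)\bigr)$ plus pieces that cancel in the cyclic sum. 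That total derivative vanishes by $\tau\circ D=0$.

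\emph{Main obstacle.} The main obstacle is the bookkeeping in this expansion. I would organize it by the multidegree $(i,j,k)$, using antisymmetry of each slot under $f\leftrightarrow g$ to halve the work. As a cross-check, I would test the identity on $A=\C[t,t^{-1}]$ with $D=t\partial_t$ and $\tau$ the residue of $t^{-1}$, where it must reproduce the $(m^3-m)$ Virasoro cocycle of \eqref{e:affine}.

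\emph{Action on currents.} For the last assertion, $fD$ is a derivation of $A$, so $\ell_f(xg)=-xfDg$ is a derivation of the bracket \eqref{e:tensor}. Moreover, $[fD,gD]=(fDg-gDf)D$, so $[\ell_f,\ell_g]=\ell_{\{f,g\}}$ and $f\mapsto\ell_f$ is a Lie homomorphism. For invariance of $\omega(xg,yh)=\kappa(x,y)\tau(gDh)$, apply the mixed Jacobi criterion of Section~\ref{s:derivations} with $\chi=0$. Up to the overall sign of $\ell_f$, this requires
\[
\tau\bigl(fDg\,Dh\bigr)+\tau\bigl(g\,D(fDh)\bigr)=\tau\bigl(D(fg)\,Dh+fg\,D^2h\bigr)=\tau\bigl(D(fg\,Dh)\bigr)=0.
\]
This vanishes by the invariance $\tau\circ D=0$, which completes the plan.
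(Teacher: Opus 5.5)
Your antisymmetry argument, the linear part ($\nu_1=\tfrac1{24}d_\CE\tau$, i.e.\ the paper's $\alpha=\tau/24$) and the action on currents (the identity $\tau(fDg\,Dh)+\tau(gD(fDh))=\tau(D(fg\,Dh))=0$) are all correct and match the paper. The gap is in the cubic cocycle identity, which is the real content of the proposition. You leave it as an unexecuted expansion, and the mechanism you predict for it is wrong. Explicitly,
\[
 D^2\{f,g\}=Dg\,D^2f-Df\,D^2g+g\,D^3f-f\,D^3g ,
\]
so the $D^3$ terms do \emph{not} cancel within a cyclic slot. The opposite happens. The terms of type $D^2\cdot D\cdot D$ cancel pointwise in the cyclic sum. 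The $D^3$ terms survive and collect into $-\sum_\cyc\tau\bigl(D^3f\,\{g,h\}\bigr)$. No total derivative such as $D(Df\,Dg\,Dh)$ appears. Instead you need one further integration by parts:
\[
 -\sum_\cyc\tau\bigl(D^3f\,\{g,h\}\bigr)=\sum_\cyc\tau\bigl(D^2f\,D\{g,h\}\bigr)
 =\sum_\cyc\tau\bigl(D^2f\,(hD^2g-gD^2h)\bigr)=0.
\]
The last sum vanishes pointwise by commutativity.

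The missing observation is that $D\{f,g\}=gD^2f-fD^2g$: the first-order cross terms $Df\,Dg$ cancel. The paper uses this directly by integrating by parts twice rather than once, $\tau\bigl(D^3\{f,g\}\,h\bigr)=\tau\bigl(D\{f,g\}\,D^2h\bigr)$. The cyclic sum then becomes $\tau\sum_\cyc(gD^2f-fD^2g)D^2h$, whose six terms cancel pairwise with no further bookkeeping. Your Gelfand--Fuks form leaves two derivatives on $\{f,g\}$, and that is exactly what generates the extra $D^3$ terms. With the correction above your route closes, but as written it would not.
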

\begin{proof}
Integration by parts for $\tau$ proves antisymmetry. For the cubic part,
move two derivatives in each cyclic term to obtain
\[
 \sum_\cyc\tau(D^3\bigl(\{f,g\}\bigr)\,h)
 =\tau\!\sum_\cyc
 \bigl(gD^2f-fD^2g\bigr)D^2h=0.
\]
The six terms cancel pairwise by commutativity. For the linear part, integration by parts gives $\tau(\{f,g\})=2\tau((Df)g)$. Fix the convention $(d_\CE\alpha)(f,g)=-\alpha(\{f,g\})$. With $\alpha(f)=\tau(f)/24$, the linear term in \eqref{e:vircoc} is exactly $d_\CE\alpha$. It can be removed by the change of central lifts $\ell_f^{\rm new}=\ell_f-\tau(f)C/24$, leaving the cubic cocycle. Finally
\[
 \tau\bigl((-fDg)Dh+gD(-fDh)\bigr)
 =-\tau\bigl(D(fgDh)\bigr)=0.
\]
This is the mixed Jacobi identity for one vector field $\ell_f$ and two currents $xg,yh$: $\ell_f$ acts on both coefficient functions and fixes the current center.
\end{proof}

For $A=\cR(\SU(2))$, $\tau=\int\mu$ and $D=D_L$, this yields
\begin{align}\label{e:su2vir}
 [\ell_I,\ell_J]&=(m-m')\sum_K C_{IJ}^{K}\ell_K
 +\frac{m^3-m}{12}\eta_{IJ}C,\notag\\
 [\ell_I,T_{aJ}]&=-m'\sum_K C_{IJ}^{K}T_{aK},\qquad
 [d_R,\ell_I]=n\ell_I.
\end{align}
Together with \eqref{e:su2current} retaining $c_L$ and omitting $c_R$,
these relations define the claimed current--Virasoro semidirect extension.
Here $d_L=-\ell_1$ is already in the vector-field algebra; it is not
adjoined a second time. In particular $[d_L,\ell_I]=m\ell_I$.
The phase $(-1)^{m-n}$ in $\eta_{IJ}$ belongs in the Virasoro term as
well as in the current term; it cannot be discarded in this fixed basis.
The action of $d_R$ on the $\ell_I$ must also be included. The $c_R$
cocycle is not preserved by all $fD_L$, so one cannot keep both current
centers while adjoining all these vector fields with trivial center action.

Proposition~\ref{p:vir} constructs a central extension; it does not classify all central extensions of this vector-field algebra. The factor $1/12$ and linear subtraction recover the circle expression $(m^3-m)/12$ for $D=t\partial_t$, $\tau(t^m)=\delta_{m,0}$. On $\SU(2)$ half-integral weights also occur. A global Laurent subalgebra would require an invertible smooth function $t$ with $Dt=t$; a local fibre angle does not supply one, since the Hopf circle bundle $S^3\to S^2$ has no global trivialization. The free-field realizations on $S^2,T^2$ \cite{CRfermions,CRvertex,CRspheres} use particular representations and regularizations; a Sugawara stress tensor on another manifold needs a separate construction.

\section{Noncompact harmonic analysis}\label{s:noncompact}
For a unimodular type-I Lie group $G$, the Plancherel transform has the form
\begin{equation}\label{e:plancherel}
 L^2(G)\cong\int_{\widehat G}^{\oplus}
 \operatorname{HS}(\mathcal H_\pi)\,\dd\mu_{\rm Pl}(\pi).
\end{equation}
$\widehat G$ is the unitary dual (equivalence classes of irreducible unitary representations), $\mathcal H_\pi$ is the Hilbert space on which $\pi$ acts, and $\operatorname{HS}(\mathcal H_\pi)$ consists of operators $A$ with $\tr(A^*A)<\infty$. The direct integral is the Hilbert space of measurable fields $A_\pi$ with $\int\|A_\pi\|_{\rm HS}^2\dd\mu_{\rm Pl}<\infty$; $\mu_{\rm Pl}$ is the Plancherel measure. Unimodular means that left and right Haar measures agree. Type~I is the representation-theoretic regularity that permits a decomposition into irreducible representations by such a measurable direct integral.
The spectral measure, rather than the mere existence of irreducible unitary
representations, determines which representations occur. For
$G=\SL(2,\R)$ the regular representation contains discrete series and
principal continuous series; complementary series and the trivial
representation do not carry Plancherel mass
\cite{Bargmann,HarishChandra,Knapp,CMRnoncompact}.
For connected real semisimple groups with finite center, existence of
discrete series is equivalent to existence of a compact Cartan subgroup
(equality of the complex ranks of $G$ and a maximal compact subgroup).
This is not an equality involving the real split rank.

The principal-series matrix coefficients used in the transform are
generalized eigenfunctions, not individual vectors of an orthonormal
Hilbert basis. Their delta-normalization and the Plancherel measure must
remain explicit. For $\SL(2,\R)/\SO(2)$, right-$\SO(2)$ invariance selects
the spherical part; ordinary discrete series have no fixed vector and drop
out. This is the hyperbolic plane, whose invariant positive metric is
different from the indefinite Killing metric on $\SL(2,\R)$ itself.

Campoamor-Stursberg, Marrani and Rausch de Traubenberg
\cite{CMRnoncompact,CMRnoncompactproc} give both a spectral construction
and a countable orthonormal realization, the latter using the basis they
call the Losert basis. This is an explicit improvement over merely knowing
that $L^2(G)$ is separable. A countable basis can span a representation
whose Casimir has continuous spectrum: its vectors need not diagonalize
that Casimir. Changing to this basis therefore creates neither a new
irreducible discrete series nor a new unitary dual.

For rigor, first form $\cL_c(G,\g)=\g\otimes C_c^\infty(G)$, its cocycles,
and its differentiations. Left and right invariant vector fields preserve
this test space and Haar measure. Consequently \eqref{e:omegaL} converges
and Proposition~\ref{p:compat} applies without a boundary at infinity.
One may then express these operations through \eqref{e:plancherel};
spectral brackets are distributional kernels tested against the transformed
test functions. In a countable basis the same operations have discrete
matrix coefficients, with convergence controlled by the chosen test
topology. A unitary map of Hilbert spaces alone does not justify
termwise multiplication or differentiation of arbitrary $\ell^2$ sums.

Larger Schwartz-type spaces are valuable, but one must verify that
multiplication, the chosen derivatives and the cocycle integrals are
continuous there. Nor does a formal integration-by-parts identity prove
self-adjointness: a symmetric operator requires a domain and an extension
analysis. For a complete measure-preserving real flow, pullback gives a
strongly continuous unitary group, and its self-adjoint generator is supplied
by Stone's theorem. This is the appropriate operator-theoretic statement
behind the Hermitian differential operators in the harmonic constructions.

\section{Soft geometry: what is transported?}\label{s:soft}
In the group-geometric approach, a coframe $\mu^A$ replaces the
Maurer--Cartan forms of a Lie algebra with structure constants
$c_{BC}{}^A$. Its soft curvature is
\begin{equation}\label{e:softcurvature}
 R^A=\dd\mu^A+\tfrac12c_{BC}{}^A\mu^B\wedge\mu^C,
 \qquad \dd R^A+c_{BC}{}^A\mu^B\wedge R^C=0.
\end{equation}
The second identity follows from the first and Jacobi. This curvature
measures departure from the Maurer--Cartan equations; it is not in general
the Riemann curvature of a Levi--Civita connection. Nonzero $R^A$ permits
additional geometry, but supplies no field equations by itself
\cite{Castellani88,Castellani90,Castellani92,CDF}.

Suppose the underlying compact smooth manifold is fixed and two positive
densities satisfy $\mu'=\rho\mu$, with $\rho\in C^\infty(M)$ strictly
positive. The map
\begin{equation}\label{e:unitary}
 U:L^2(M,\mu)\longrightarrow L^2(M,\mu'),\qquad Uf=s f,
 \quad s=\rho^{-1/2},
\end{equation}
is unitary, and transports an orthonormal basis. It does not preserve
ordinary multiplication. For a vector-field derivation $D$,
\begin{equation}\label{e:transportD}
 D^U=UDU^{-1}=D+\tfrac12D(\log\rho),
 \qquad D^U1=\tfrac12D(\log\rho).
\end{equation}
The last term is a multiplication operator, and
\begin{equation}\label{e:leibnizdefect}
 D^U(fg)-(D^Uf)g-f(D^Ug)=-\tfrac12D(\log\rho)fg.
\end{equation}
Thus $D^U$ is a derivation for the ordinary product exactly when
$D\rho=0$. Similarity preserves operator commutators, but not the
Leibniz rule for a product it does not transport.

\begin{proposition}[Two consistent constructions]\label{p:soft}
There are two distinct ways to combine \eqref{e:unitary} with current algebras.
\emph{(a)} Keep the ordinary product and use actual derivations (vector-field operators $D$ satisfying Leibniz) and closed
currents for the chosen geometry.
\emph{(b)} Transport the product as well:
\begin{equation}\label{e:star}
 f\star g=U(U^{-1}f\,U^{-1}g)=s^{-1}fg.
\end{equation}
Then $(C^\infty(M),\star)$ has unit $s$, $D^U$ is a derivation, and
transporting the bracket and cocycles gives an algebra isomorphic to the
original extension.
\end{proposition}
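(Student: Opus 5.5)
Part (a) needs essentially no new argument. With the ordinary product fixed, changing $\mu$ to $\mu'=\rho\mu$ changes only the auxiliary data: which vector fields are divergence-free and which closed currents are natural. The algebraic statements are density independent. Theorem~\ref{t:universal} and \eqref{e:smoothcoho} refer to the product alone. Proposition~\ref{p:div} applies to any positive volume form, here with $\diver_{\mu'}L=0$. Proposition~\ref{p:compat} applies to the selected $L$ with $\mu'$ in place of $\mu$. So (a) is consistent provided the operators used are honest derivations $D$, not the transported $D^U$. By \eqref{e:leibnizdefect}, the latter would break Jacobi unless $D\rho=0$. I would state this and cite \eqref{e:leibnizdefect} as the reason the two constructions must not be mixed.

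For (b), the plan is transport of structure. Define $\star$ by \eqref{e:star}, so that $U:(C^\infty(M),\cdot)\to(C^\infty(M),\star)$ is by construction a bijective algebra map: $U(fg)=Uf\star Ug$. Since $s=\rho^{-1/2}$ is smooth and strictly positive, $U$ and $U^{-1}$ preserve $C^\infty(M)$ and are continuous in the Fr\'echet topology. Commutativity and associativity of $\star$ follow at once from $f\star g=s^{-1}fg$, and so does the unit: $s\star g=g$, and equally $U1=s$. The derivation property is the one-line conjugation
\[
 D^U(f\star g)=UD(U^{-1}f\,U^{-1}g)=UDU^{-1}f\star g+f\star UDU^{-1}g .
\]
Setting $\cL^\star=\g\otimes(C^\infty(M),\star)$, the map $\mathrm{id}_\g\otimes U$ is a Lie algebra isomorphism $\cL\to\cL^\star$.

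Cocycles are transported by $\omega^U(u,v)=\omega(U^{-1}u,U^{-1}v)$. Under this definition the cocycle identity, antisymmetry and mixed Jacobi identities with $D^U$ are pulled back verbatim. The extension $\cL^\star\oplus\C c$ is then isomorphic to $\widehat\cL\oplus\C c$ via $(\mathrm{id}\otimes U)\oplus\mathrm{id}_c$, and derivations $d\mapsto d$ correspond, since $[d,\cdot]$ becomes $D^U$. For example, $\omega_L$ becomes
\[
 \omega_L^U(u,v)=\int_M\kappa(s^{-1}u,Ls^{-1}v)\,\mu=\int_M\kappa(u,L^Uv)\,\mu' ,
\]
using $s^{-2}\mu=\mu'$. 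This makes explicit that the transported cocycle is expressed through the transported operator and the new density.

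The main obstacle is not any computation but keeping the two structures from being mixed. One must check that the transported trace is $\tau^U(f)=\tau(U^{-1}f)=\int s^{-1}f\,\mu$ and not $\int f\mu'$. Likewise, harmonic mode products must now be $\star$-products. I would verify these identifications explicitly on $\omega_L$ as above. The remaining point is that (a) and (b) are distinct: in general $D^U$ is not a derivation of the ordinary product unless $D\rho=0$, by \eqref{e:leibnizdefect}.
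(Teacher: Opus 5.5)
Your proposal is correct and follows the paper's route. Part (b) is transport of structure through $U$ with $\omega^U(Uu,Uv)=\omega(u,v)$, and your explicit checks are accurate: the unit $s$, the conjugation proof of Leibniz for $D^U$, and $\omega_L^U(u,v)=\int_M\kappa(u,L^Uv)\,\mu'$. For (a) the paper adds a sharper observation: $L'=\rho^{-1}L$ satisfies $\iota_{L'}\mu'=\iota_L\mu$, hence $\diver_{\mu'}L'=0$ and $\omega_{\mu',L'}=\omega_{\mu,L}$, so the old cocycles persist in the new geometry; your appeal to Propositions~\ref{p:div} and~\ref{p:compat} with $\mu'$ in place of $\mu$ nonetheless already suffices for consistency.
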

\begin{proof}
In (b), associativity, Leibniz and Jacobi follow by applying $U$ to the
corresponding identities. In particular
$\omega^U(Uu,Uv)=\omega(u,v)$ is a cocycle for the transported bracket.
In (a), if $\diver_\mu L=0$, then
\begin{equation}\label{e:densityL}
 L'=\rho^{-1}L,\qquad
 \diver_{\mu'}L'=0,\qquad \omega_{\mu',L'}=\omega_{\mu,L}.
\end{equation}
These equalities follow from
$\iota_{L'}\mu'=\iota_L\mu$. Compatibility is then checked using
Proposition~\ref{p:compat}, not inferred from similarity.
\end{proof}

This distinction refines, rather than replaces, the discussion in
\cite[Sec.~2.3]{CMRsoft}. That paper writes both the ordinary product of
the rescaled modes and a deformed tensor product, and explicitly retains
the original $D_i$ and original integration measure in the ordinary-product
central extension. Write $\psi^{\prime}_I=s\psi_I=\sum_JP_I{}^J\psi_J$ and $D_i\psi_J=\lambda_i(J)\psi_J$. Denote the matrix of $D_i$ in the original basis by $\mathsf D_i$, with $(\mathsf D_i)_J{}^K=\lambda_i(J)\delta_J{}^K$.
\[ D_i\psi^{\prime}_I=\sum_K(P\mathsf D_iP^{-1})_I{}^K\psi^{\prime}_K. \] The first index labels the basis function acted upon, the second its output coefficient; this specifies the index convention. This is the matrix of the same operator $D_i$ in a new basis, not the conjugated differential operator $D_i^U=UD_iU^{-1}$. The latter is diagonal on $\psi^{\prime}_I$ but satisfies Leibniz only for \eqref{e:star}, unless $D_i\rho=0$. Infinite matrices are understood on the smooth domain.
For generic smooth $s$, the algebraic span of the rescaled representative
functions need not be closed under ordinary multiplication; a smooth
completion is then necessary.

Metric deformation alone does not change $C^\infty(M,\g)$ as an abstract
smooth current algebra. It changes distinguished harmonic bases, spectra,
isometries and possibly the chosen cocycles. For example, a Berger metric
$\sigma_1^2+\sigma_2^2+\varepsilon^2\sigma_3^2$ on $\SU(2)$ has volume
form $\varepsilon$ times the round one; after normalization its Haar
measure is unchanged, although its Laplace spectrum and isometry group
change. Claims of a new algebra must therefore specify which product,
topology, central quotient and geometric decorations are being compared.

\section{The cohomological legacy of the Sorba collaborations}\label{s:super}
The two papers \cite{CFRS90,RS92} are structural predecessors of the harmonic
approach, not just examples of sphere algebras. Coquereaux, Frappat, Ragoucy
and Sorba \cite{CFRS90} study
\begin{equation}\label{e:supercircle}
 A_N=C^\infty(S^1)\otimes\Lambda(\theta^1,\ldots,\theta^N),
 \qquad \g\otimes A_N,
\end{equation}
where $\g$ is ordinary and the parity comes from the coefficient algebra.
They determine cyclic cocycles through Grassmann and circle cohomology,
and compute compatible superderivation algebras, particularly for
$N=1,2,3$. Ragoucy and Sorba \cite{RS92} extend this analysis to a general
compact smooth base, consider twists and noncompact extensions, and apply
it to superconformal, area-preserving and Krichever--Novikov algebras.
Their Sections~3--5 and Appendix~A already contain the closed-current and
compatibility mechanisms used in Sections~\ref{s:central} and
\ref{s:derivations} above. Explicit harmonic bases implement these mechanisms
in coordinates adapted to a group action; they do not replace their
cohomological classification.

The necessary signs can be stated without choosing Grassmann derivative
conventions. For homogeneous $f,g,h\in A_N$, a homogeneous cyclic cocycle
$\Phi$ obeys
\begin{align}\label{e:supercyclic}
 \Phi(f,g)&=-(-1)^{|f||g|}\Phi(g,f),\notag\\
 (-1)^{|f||h|}\Phi(fg,h)&+(-1)^{|g||f|}\Phi(gh,f)
 +(-1)^{|h||g|}\Phi(hf,g)=0.
\end{align}
Here $|f|\in\Z/2\Z$ is the parity of a homogeneous coefficient function; all parity sums are modulo two.
Then $\omega(xf,yg)=\kappa(x,y)\Phi(f,g)$ is a super Lie cocycle.
If $\Phi$ has parity $\epsilon$, its central generator has parity
$\epsilon$; the scalar value vanishes unless $|f|+|g|=\epsilon$.
For example, if $D$ is an even derivation and a Berezin trace $\tau$
satisfies $\tau(Df)=0$, then $\Phi(f,g)=\tau(fDg)$ is such a cocycle.
The trace has parity $N\bmod2$, so restricting in advance to even centers
would discard part of the general construction.

For a homogeneous superderivation $V$ fixing that center, the compatibility
condition is
\begin{equation}\label{e:supercompat}
 \Phi(Vf,g)+(-1)^{|V||f|}\Phi(f,Vg)=0.
\end{equation}
Berezin integration converts this into the graded counterpart of
\eqref{e:compat}; keeping \eqref{e:supercompat} as the primary equation
avoids ambiguities between left and right odd derivatives.
The analysis of \cite[Sec.~4.1.2]{CFRS90} shows that, for the standard
action on the same $N$-supercircle and a fixed central generator, the usual
$N$-extended superconformal algebra preserves a nonzero current cocycle
only for $N=1$ among the supersymmetric cases. This is not a nonexistence
theorem for $N>1$ superconformal algebras. Enlarging the superspace or
changing the extension problem leads to other compatible algebras; these
possibilities are part of \cite{CFRS90,RS92}.

There is a similarly instructive qualification in the area-preserving case.
On an oriented symplectic surface $(\Sigma,\mu)$, define $X_F$ by
$\iota_{X_F}\mu=\dd F$. It is divergence-free and determines a current
cocycle, even though the defining one-form is exact. If $V$ preserves
$\mu$, compatibility with this \emph{fixed} cocycle requires
$[V,X_F]=0$. Indeed, $\iota_{[V,X_F]}\mu=\cL_V(\iota_{X_F}\mu)-\iota_{X_F}(\cL_V\mu)=\cL_V\dd F=\dd(VF)$. Nondegeneracy of $\mu$ gives the equivalence with $\dd(VF)=0$; $VF$ is constant on each connected component. Thus the full area-preserving
algebra does not preserve every individual cocycle. The choice of a
cocycle, a family of cocycles or a transforming central space is essential
when relating current algebras to membrane symmetries.

\section{Castellani's loop geometry and the recent constructions}\label{s:castellani}
\subsection{What the three historical papers establish}
In \cite{Castellani88}, closed loops $X^\mu(\sigma)$ are expanded in
Fourier modes $x_n^\mu$. Castellani distinguishes arbitrary
\emph{loop-space diffeomorphisms}, whose parameters depend on the whole
loop, from reparametrizations of its circle parameter. A current
Poincar\'e algebra together with circle vector fields provides the algebra
to be gauged. In the consistent convention of \eqref{e:affine}, its
centerless kinematics is
\begin{equation}\label{e:Castloop}
 (\mathfrak{iso}(D-1,1)\otimes\C[t,t^{-1}])\rtimes\mathrm{Witt},
 \qquad [\ell_m,x_n]=-n x_{m+n}.
\end{equation}
Soft vielbeins, curvatures and horizontality conditions turn this algebraic
input into geometric field data. The paper omits central terms at the
classical starting point and discusses them separately in connection with
quantum BRST nilpotency.

The closing discussion of \cite[p.~50]{Castellani88} explicitly proposes
maps $G'/H\to G$ instead of $S^1\to G$, with generators indexed by
coset harmonics and brackets involving Clebsch--Gordan coefficients;
$S^2$ is given as a membrane example. It is therefore a direct conceptual
antecedent of \eqref{e:products}--\eqref{e:modejac}. The recent programme
makes particular harmonic realizations, spectral decompositions and
central/derivation data explicit. The general idea of replacing Fourier
modes by homogeneous-space harmonics should not be attributed exclusively
to the recent literature. Related Kaluza--Klein symmetry considerations
also predate it \cite{DolanDuff,SalamStrathdee}.

The 1989 paper \cite{Castellani89} goes beyond ordinary Lie algebras by
including higher-degree form generators. It constructs the loop extension
of an FDA containing a two-form in a representation $R$, together with
circle reparametrizations, and computes its BRST operator and the
normal-ordering constraints. For example, in that paper's normalization
the shifted quantum parameters satisfy
\begin{equation}\label{e:CastBRST}
 k'=-C_{\rm Adj}+2C_R,\qquad
 c'=26+2(\dim\mathfrak a-2\dim R)
 \quad\text{\cite[Eqs.~(3.13)]{Castellani89}}.
\end{equation}
Here $C_{\rm Adj}$ and $C_R$ are its trace coefficients for the finite
algebra $\mathfrak a$ and $R$. These representation- and ghost-dependent
quantum conditions are not consequences of a classical current
two-cocycle, and cannot be transplanted to a higher-dimensional base
without a new quantization calculation.

The 1990 paper \cite{Castellani90} develops the $D=4$, $N=1$
super-Poincar\'e version of this gauging. The fields are mode-indexed
vielbeins, spin connections, gravitini and a reparametrization gauge field.
In the rheonomic parametrization, curvature components containing fermionic one-forms (outer components) are constrained to be functions, typically linear combinations, of the bosonic vielbein components (inner components). These superspace constraints prevent extra independent fields in the fermionic directions. In \revred{Castellani's} loop version, imposing them together with the Bianchi identities yields analogues of the Einstein and gravitino equations \cite{Castellani90,CDF}.
An expansion in internal loop coordinates produces a tower of higher-spin
fields; the paper discusses mass generation through those coordinates
and a nonzero background for the zero-mode reparametrization field.
These are dynamical ingredients absent from the bare current bracket.
In particular, Bianchi identities for arbitrary unconstrained connections
are identities, not equations of motion: the rheonomic constraints carry
the extra content.

Castellani's conclusion \cite[Sec.~5]{Castellani90} explicitly distinguishes
this local loop-space field theory from the nonlocal loop-space
interactions associated with string joining and splitting. It does not
establish equivalence with a perturbative superstring theory. Thus the
historical title should not be read as a completed derivation of string
dynamics from an affine algebra.

\subsection{A precise algebraic bridge through higher cocycles}
An FDA is a free graded-commutative algebra with a degree-one differential
$q$, $q^2=0$. Its form grading is independent of super parity and of
Grassmann coefficient coordinates. For a Lie algebra $\mathfrak a$,
a representation $(R,\rho)$, and a closed CE three-cochain
$h\in Z^3_\CE(\mathfrak a,R)$, take anticommuting degree-one $a^a$ and
commuting degree-two $b^i$:
\begin{align}\label{e:FDA}
 qa^a&=-\tfrac12 f_{bc}{}^a a^ba^c,\notag\\
 qb^i&=-(\rho_a)^i{}_j a^ab^j-\tfrac16h^i{}_{abc}a^aa^ba^c.
\end{align}
The graded Leibniz rule is
$q(uv)=(qu)v+(-1)^{\deg u}u(qv)$; in particular
$q(a^ab^i)=(qa^a)b^i-a^a(qb^i)$. Independent super parity adds its own
signs. The equation $q^2=0$ gives Jacobi,
$[\rho_a,\rho_b]=f_{ab}{}^c\rho_c$ and $d_\CE h=0$.
A coboundary change in $h$ is absorbed by a quadratic shift of $b$.
More generally, a degree-$p$ form uses a degree-$(p+1)$ CE cocycle
\cite{Sullivan,DAuriaFre,Castellani89,CDF}; an ordinary scalar central
extension uses $H^2_\CE$ and a degree-one dual generator.

For commutative $A$, equip $R\otimes A$ with
$(xf)\cdot(rg)=\rho(x)r\otimes fg$. Then
\begin{equation}\label{e:hA}
 h_A(xf,yg,zk)=h(x,y,z)\otimes fgk
\end{equation}
is a three-cocycle: every term of $d_\CE h_A$ has the same product of
four functions, leaving $d_\CE h=0$. This extends loop FDA coefficients;
an action and quantum BRST analysis remain separate.

An $L_\infty$ algebra is a graded space $\mathfrak l$ with graded-skew
maps $l_k:\mathfrak l^{\otimes k}\to\mathfrak l$ of degree $2-k$ obeying
\begin{equation}\label{e:higherjacobi}
 \sum_{i+j=n+1}(-1)^{i(j-1)}\sum_{\sigma\in\operatorname{Sh}(i,n-i)}
 \chi(\sigma;x)\,l_j\bigl(l_i(x_{\sigma(1)},\ldots,x_{\sigma(i)}),
 x_{\sigma(i+1)},\ldots,x_{\sigma(n)}\bigr)=0.
\end{equation}
Here $n\ge1$, shuffles preserve order within each block, and $\chi$ is the
permutation sign times the graded exchange sign. The first identities give
$l_1^2=0$, Leibniz for $l_1,l_2$, and a Jacobi defect controlled by $l_3$;
only $l_2$ nonzero gives a graded Lie algebra \cite{LadaStasheff}.
For homogeneous $x_i$ and degree-zero coefficients, set
\begin{equation}\label{e:linfty}
 l_k^A(x_1f_1,\ldots,x_kf_k)
 =l_k(x_1,\ldots,x_k)\otimes f_1\cdots f_k.
\end{equation}
Every nested term has coefficient $f_1\cdots f_n$, with no extra signs,
so the higher identities hold, including those dual to an FDA.
In a harmonic algebra the cubic coefficients are
\begin{equation}\label{e:tripleC}
 \psi_I\psi_J\psi_K=\sum_P C_{IJK}^{P}\psi_P,\qquad
 C_{IJK}^{P}=\sum_Q C_{IJ}^{Q}C_{QK}^{P}.
\end{equation}
For Laurent modes this is $m_1+m_2+m_3=m$.
The tensor construction uses finite sums; the dual mode FDA requires a
completed dual, and smooth extensions require continuity and convergence.

Coefficient extension and adjoining higher generators can thus be combined.
For example, $h(x,y,z)=\kappa(x,[y,z])$ with trivial scalar coefficients
gives $qb=-h_{abc}a^aa^ba^c/6$. Its skeletal dual has $\mathfrak a$ in
degree zero, a central line in degree $-1$, $l_1=0$ and $l_3=h$.
The ordinary current two-cocycle instead adds a degree-zero central Lie
generator. These degrees differ; transgression, such as loop integration,
needs a specified degree-lowering map, not merely a common $\kappa$.

\subsection{\texorpdfstring{Current algebras of order $F$: central terms and an obstruction}{Current algebras of order F: central terms and an obstruction}}
\label{s:orderF}
Lie algebras of order $F$ provide another higher structure
\cite{RSorderF02,CRorder308,CRcolor09}, distinct from the graded-skew
$L_\infty$ operations above and from alternating Filippov brackets.
In the elementary $\Z/F\Z$-graded case, $F\ge2$, $\g_0$ is a Lie algebra, $\g_1$ a
$\g_0$-module, and $\mu_F:S^F\g_1\to\g_0$ a symmetric equivariant map.
With $[Y,X]=-[X,Y]$, the additional identities are
\begin{align}\label{e:Fidentities}
 [X,\mu_F(Y_1,\ldots,Y_F)]
 &=\sum_{j=1}^F\mu_F(Y_1,\ldots,[X,Y_j],\ldots,Y_F),\notag\\
 \sum_{j=1}^{F+1}[Y_j,\mu_F(Y_1,\ldots,\widehat Y_j,\ldots,Y_{F+1})]&=0.
\end{align}
The hat means omission. These are the equivariance and generalized Jacobi
identities of \cite[App.~A]{CRorder308}; $F=2$ recovers a Lie superalgebra.
The general definition allows further homogeneous components; we restrict
to this elementary case and a degree-zero scalar center.

For a commutative associative $A$, put $B_i=\g_i\otimes A$. Extend the
binary bracket and module action by multiplication, and set
\begin{equation}\label{e:Fcurrent}
 \mu_F^A(Y_1f_1,\ldots,Y_Ff_F)
 =\mu_F(Y_1,\ldots,Y_F)\otimes f_1\cdots f_F.
\end{equation}
Every nested identity has the same coefficient product, proving all
identities directly; $1\otimes D$, $D\in\Der(A)$, preserves every operation
by Leibniz. This uses finite tensor sums; smooth completions require the
continuity hypotheses already specified in Section~\ref{s:functions}.

Adjoin $\C z$ to $B_0$, let $z$ act trivially on $B_1$, and keep the
module action unchanged. The possible corrections are an alternating
$\omega:B_0\times B_0\to\C$ in the binary bracket and a symmetric
$\beta:S^FB_1\to\C$ in $\mu_F^A$, both multiplying $z$.
The extended identities hold if and only if $d_\CE\omega=0$ and
\begin{equation}\label{e:Fcentral}
 \omega(X,\mu_F^A(Y_1,\ldots,Y_F))
 =\sum_{j=1}^F\beta(Y_1,\ldots,[X,Y_j],\ldots,Y_F),
 \quad X\in B_0.
\end{equation}
Indeed, ordinary Jacobi gives the CE condition; central components of
equivariance give \eqref{e:Fcentral}; generalized Jacobi adds no condition
since $[Y,z]=0$. A section change $s'(X)=s(X)+\alpha(X)z$ gives
$\omega'=\omega-\alpha\circ[,]$ and
$\beta'=\beta-\alpha\circ\mu_F^A$. Thus the extension data are a coupled
pair, not merely an ordinary Lie two-cocycle. In particular $\beta=0$
requires $\omega(B_0,\operatorname{im}\mu_F^A)=0$.

A direct positive construction uses an invariant symmetric scalar
form $b\in(S^F\g_1^*)^{\g_0}$ and a linear $\tau:A\to\C$:
\begin{equation}\label{e:Fpositive}
 \omega=0,\qquad
 \beta(Y_1f_1,\ldots,Y_Ff_F)=b(Y_1,\ldots,Y_F)\tau(f_1\cdots f_F).
\end{equation}
For $X=xa$, the right-hand side of \eqref{e:Fcentral} factors as
$\tau(af_1\cdots f_F)$ times the invariance identity of $b$, hence vanishes.
If $\tau\circ D=0$, $1\otimes D$ also fixes this center.
For example, take $F=3$, $\g_0=\mathfrak{sl}_3(\C)$, and $\g_1$ its
adjoint copy, with the cubic product induced by $\kappa$
\cite[Eq.~(3.4)]{RSorderF02}:
\begin{equation}\label{e:Fcubic}
 \mu_3(\bar x,\bar y,\bar w)
 =\kappa(x,y)w+\kappa(x,w)y+\kappa(y,w)x.
\end{equation}
Bars identify the adjoint copy. The polarization of $\tr(y^3)$ gives
a nonzero invariant $b$. With unital $A$ and nonzero $\tau$, this produces
a nonzero $\beta$; it is not removed by a section change, since $B_0$ is
perfect and $\omega'=0$ would force $\alpha=0$.

An affine two-cocycle can nevertheless be obstructed even when $\beta$
is allowed. Use \eqref{e:Fcubic}, $A=\C[t,t^{-1}]$, $D=t\partial_t$,
$\tau(t^m)=\delta_{m,0}$, and
$\omega_k(xf,yg)=k\kappa(x,y)\tau(fDg)$. Choose
$q=\kappa(h,h)\ne0$, $X=ht^{-1}$ and
$(Y_1,Y_2,Y_3)=(\bar h t,\bar h,\bar h)$. All $[X,Y_j]$ vanish, whereas
\begin{equation}\label{e:Fobstruction}
 \omega_k(X,\mu_3^A(Y_1,Y_2,Y_3))
 =\omega_k(ht^{-1},3qht)=3kq^2.
\end{equation}
Thus no $\beta$ satisfies \eqref{e:Fcentral} for $k\ne0$.
This obstruction concerns the specified degree-zero extension with
unchanged module action; other types of extension and contracted
kinematical examples in \cite{CRorder308} require separate checks.

\Needspace{19\baselineskip}
\subsection{Comparison and limits of the correspondence}
\begin{center}
\small
\setlength{\tabcolsep}{4pt}
\begin{tabular}{@{}>{\raggedright\arraybackslash}p{.19\textwidth}>{\raggedright\arraybackslash}p{.365\textwidth}>{\raggedright\arraybackslash}p{.385\textwidth}@{}}
\toprule
Feature & Castellani, 1988--1990 & Recent harmonic constructions\\
\midrule
Basic variable & Loops and fields on (super)loop space & Functions on a prescribed manifold $M$\\\addlinespace[6pt]
Coefficient algebra & Circle Fourier modes; coset-harmonic extension proposed & Explicit compact, noncompact and soft-manifold realizations\\\addlinespace[6pt]
Finite algebra & Poincar\'e, super-Poincar\'e and FDA data & Primarily a finite-dimensional simple Lie algebra\\\addlinespace[6pt]
Extensions & Loop/FDA and reparametrization symmetries; quantum BRST analysis & Selected current cocycles and compatible derivations\\\addlinespace[6pt]
Softness & Dynamical vielbeins, curvatures and horizontality & Geometric measure, basis and operator data\\\addlinespace[6pt]
Dynamics & Constrained Bianchi identities and a proposed higher-spin field theory & Algebraic and representation-theoretic input; no general field equations\\
\addlinespace[6pt]
\bottomrule
\end{tabular}
\end{center}
The table compares \cite{Castellani88,Castellani89,Castellani90}
with \cite{CMRnoncompact,CMRnoncompactproc,CMRsoft,CMRreview}; neither
programme subsumes the other. Poincar\'e algebras are nonsemisimple with
degenerate Killing forms, so Theorem~\ref{t:universal} requires replacement
by an analysis of their additional cohomology \cite{NeebWagemann}.

Our finite-dimensional base $M$ supports coefficient functions. Castellani's
loop space has whole maps $X:S^1\to\text{spacetime}$ as points, with
Fourier coefficients as coordinates for its gauge fields; changing a metric
on $M$ does not introduce those fields. Constants form a Lie subalgebra,
but projection $P_0$ onto Laurent zero modes is generally not a Lie map:
$P_0([xt,yt^{-1}])=[x,y]$, while $[P_0(xt),P_0(yt^{-1})]=0$.
Discarded modes can therefore source retained ones, and a dynamical
truncation requires a separate consistency argument.

\section{Conclusions and research directions}\label{s:outlook}
The common mathematical structure is an algebra of coefficients, an
invariant form and a closed current. Harmonic analysis makes its brackets
explicit, while compatibility selects the derivations of a chosen
extension. The work of \cite{CFRS90,RS92} provides a cohomological
foundation; the recent compact and noncompact constructions provide
concrete realizations; Castellani's work adds the separate questions of
higher forms, gauging and dynamics. Equations \eqref{e:hA} and
\eqref{e:linfty} give a precise algebraic route for combining these inputs.
For order-$F$ currents, \eqref{e:Fcurrent} gives the coefficient
extension, while \eqref{e:Fcentral} is the extra test for scalar central terms.

Several problems remain distinct. First, classify the selected cocycles
under diffeomorphisms and determine their stabilizers, including the
effect of zeros or singular supports. Second, establish representation
theory with specified domains, positivity and regularization, rather than
infer an affine highest-weight theory from a mode basis. Third, distinguish
changes of harmonic realization from genuine deformations of an algebra
with specified topology and geometric data. Fourth, integrate Lie algebra
cocycles to current-group extensions: period and descent obstructions
remain, even after Jacobi is proved \cite{Neeb02,JanssensWockel}.
Finally, a higher-dimensional gauged FDA needs an action or rheonomic
constraints and a separately verified quantum consistency condition.

These distinctions also delimit physical applications. Harmonic products
organize Kaluza--Klein couplings, but a finite set of modes is usually not
closed under multiplication, so an arbitrary harmonic cutoff is not a
consistent truncation. Closure and dynamical consistency must be checked
in a specified model; generalized parallelizations furnish more structured
examples \cite{Blair24}. Likewise a central cocycle alone is not a proof
of anomaly cancellation, flux quantization, a supergravity duality or
holographic equivalence. The framework gives concrete algebraic data with
which such questions can be investigated.

\subsection*{Funding}
R.C.-S. has been partially supported by the Agencia Estatal de
Investigaci\'on (Spain) under grants PID2023-148373NB-I00, funded by
MCIN/AEI (\nolinkurl{10.13039/501100011033}) and FEDER, UE, and
PID2024-156578NB-I00, funded by MICIU/AEI
(\nolinkurl{10.13039/501100011033}) and FEDER, EU.

\bibliographystyle{amsplain}
\bibliography{ref}
\end{document}